\def\BibTeX{{\rm B\kern-.05em{\sc i\kern-.025em b}\kern-.08em
    T\kern-.1667em\lower.7ex\hbox{E}\kern-.125emX}}
\newtheorem{theorem}{Theorem}
\newtheorem{prop}{Proposition}
\def\copyrightpage{\clearpage
\pagestyle{empty}
\begin{minipage}[b]{\textwidth}
\raggedright
\setlength{\parskip}{\baselineskip}
\bigskip
{\Huge IEEE Copyright Notice} \par
\bigskip
{\Large \copyright \, 2020 IEEE.  Personal use of this material is permitted.  Permission from IEEE must be obtained for all other uses, in any current or future media, including reprinting/republishing this material for advertising or promotional purposes, creating new collective works, for resale or redistribution to servers or lists, or reuse of any copyrighted component of this work in other works.} \par
\bigskip\bigskip
{\Large \textbf{Accepted to be Published in: Proceedings of the 2020 IEEE International Symposium on Hardware Oriented Security and Trust (HOST),
May 4-7, 2020, San Jose, CA, USA} }
\end{minipage}
% \vspace*{2\baselineskip}
% \cleardoublepage
% \rfoot{\thepage}
}
\begin{document}

\title{RS-Mask: Random Space Masking as an Integrated Countermeasure against Power and Fault Analysis}

\author{\IEEEauthorblockN{Keyvan Ramezanpour, Paul Ampadu, and William Diehl}
\IEEEauthorblockA{\textit{The Bradley Department of Electrical and Computer Engineering} \\
\textit{Virginia Tech}\\
Blacksburg, VA 24061, USA \\
\{rkeyvan8,ampadu,wdiehl\}@vt.edu}}

\copyrightpage
\maketitle

\begin{abstract}
    While modern masking schemes provide provable security against passive side-channel analysis (SCA), such as power analysis, single faults can be employed to recover the secret key of ciphers even in masked implementations. In this paper, we propose random space masking (RS-Mask) as a countermeasure against both power analysis and statistical fault analysis (SFA) techniques. In the RS-Mask scheme, the distribution of all sensitive variables, faulty and/or correct values is uniform, and it therefore protects the implementations against any SFA technique that exploits the distribution of intermediate variables, including fault sensitivity analysis (FSA), statistical ineffective fault analysis (SIFA) and fault intensity map analysis (FIMA). 
    We implement RS-Mask on AES, and show that a SIFA attack is not able to identify the correct key. We additionally show that an FPGA implementation of AES, protected with RS-Mask, is resistant to power analysis SCA using Welch's t-test.  The area of the RS-Masked AES is about 3.5 times that of an unprotected AES implementation of similar architecture, and about 2 times that of a known FPGA SCA-resistant AES implementation. Finally, we introduce \textit{infective} RS-Mask that provides security against differential techniques, such as differential fault analysis (DFA) and differential fault intensity analysis (DFIA), with a slight increase in overhead.
\end{abstract}

\begin{IEEEkeywords}
Masking, Power Analysis, RS-Mask, Statistical Fault Analysis, SIFA, Threshold Implementation.
\end{IEEEkeywords}
%\footnote{\copyright{2019 IEEE. Personal use of this material is permitted. Permission from IEEE must be obtained for all other uses, in any current or future media, including reprinting/republishing this material for advertising or promotional purposes, creating new collective works, for resale or redistribution to servers or lists, or reuse of any copyrighted component of this work in other works.} \\Accepted to be Published in: Proceedings of the 2020 IEEE International Symposium on Hardware Oriented Security and Trust (HOST) May 4-7, 2020, San Jose, CA, USA.}

\section{Introduction}

Physical implementations of cryptographic algorithms, even with desirable algorithmic security, can leak information about secret data. In side-channel analysis (SCA), an attacker analyzes the data-dependent signatures or internal states of a cryptographic device to infer secret information. It is therefore critical to protect hardware implementations against SCA to achieve the promised security level of cryptographic algorithms.

In passive SCA, an attacker observes signals leaked from a device processing secret data. 
These techniques include simple power analysis (SPA) and differential power analysis (DPA) \cite{kocher1999differential, mahanta2015power, fabvsivc2016simple, chakraborty2017correlation, luo2018power}. Correlation power analysis (CPA) \cite{brier2004correlation} generalizes DPA by evaluating the correlation of the power samples with a mathematical leakage model, such as Hamming weight (HW) of the processed data \cite{coron2000statistics}, Hamming distance (HD) \cite{brier2004correlation, li2008enhanced} or the number of glitches during the operation of non-linear functions \cite{mangard2005successfully}. Deep learning has also been employed in \cite{timon2019non} to extract higher order statistics of power traces.
In contrast to model-based analysis, mutual information analysis (MIA) in \cite{gierlichs2008mutual} and template attacks \cite{choudary2013efficient}, assume a probability distribution, such as multivariate Gaussian, with data-dependent parameters.

Fault analysis (FA) is a popular and powerful active SCA technique in which a fault injected into the operations of a cipher might result in an error depending on the values of internal secret states \cite{robisson2007differential}. As such, a data-dependent response is induced in the hardware via fault injection.
Differential fault analysis (DFA) inspects the data flow in the internal state of a cipher implementation \cite{liu2017hybrid, li2016impossible, siddhanti2017differential}. The strong adversary model in DFA is difficult to employ on modern nonce-based ciphers.

Statistical fault analysis (SFA) relaxes the assumptions of DFA to achieve more powerful FA attacks. SFA techniques such as fault sensitivity analysis (FSA) \cite{li2010fault}, differential fault intensity analysis (DFIA) \cite{ghalaty2014differential}, ciphertext-only fault analysis (CFA) \cite{li2018ciphertext}, fault intensity map analysis (FIMA) \cite{ramezanpour2019fima} and one version of statistical ineffective fault analysis (SIFA) \cite{dobraunig2018sifa}, exploit the bias induced in the distribution of sensitive variables and/or their differences, via faults leading to timing failure of logic circuits. Biased fault injection, such as laser beams, have also been employed in \cite{dobraunig2016statistical} to induce a biased distribution of variables. Fault attacks such as \cite{moro2013electromagnetic, korak2014effects, yuce2016software} and a second version of SIFA in \cite{dobraunig2018statistical} induce a biased distribution of values by corrupting the instruction registers of processors or internal registers of non-linear functions.

A popular countermeasure against passive SCA that provides provable security is masking, in which secret variables are split into random independent shares. Cipher computations are conducted on separate shares, which are independent of secret data, in a similar way as multi-party computation protocols \cite{prouff2011higher}. 
A widely accepted masking scheme providing security against $d$-th order SCA with $d+1$ shares is the Threshold Implementation (TI) \cite{de2015higher, moradi2011pushing}. The main properties of TI schemes that provide provable security are \textit{non-completeness} (i.e., any combination of $d$ shares of a masked function is independent of at least one share of data), \textit{correctness} (i.e., the final result is correct), and \textit{uniformity} (i.e., output statistics match the input statistics).  Hence, no $d$-th order statistics of the observed signals leak information about the secret.

Most existing countermeasures against DFA employ redundancy to detect an error in computations. Different forms of redundancies include redundant arithmetic \cite{lee2019high, kermani2017reliable, mozaffari2014fault}, time and spatial redundancies \cite{malkin2006comparative, patranabis2016fault}, and information redundancy using coding theory \cite{schneider2016parti, dofe2016comprehensive} or MACs \cite{reparaz2018capa}. However, redundancy-based countermeasures cannot protect against SFA techniques that do not need faulty values for analysis, such as FSA, SIFA and FIMA.

In this paper we present \textit{random space masking} (RS-Mask) as a countermeasure against both passive SCA and fault analysis, and implement this countermeasure in the Advanced Encryption Standard (AES). The linear operations of the cipher are protected in the same way as in a masked scheme. The S-box computations are carried out in a random space of intermediate variables with uniform distribution. The mapping of intermediate variables to the random space is designed such that the output of the S-box is the correct value covered with a Boolean mask. Therefore, any bias induced by fault injection will be randomized with uniform distribution. This countermeasure can be adapted for other ciphers by tailoring proper mapping of the variables to a random space for the specific non-linear operations involved.

We verify the fault resistance of the RS-Masked AES by observing the distribution of sensitive variables under fault injection, and attempting to recover a secret key through computer simulations of SIFA. We further verify its resistance to power analysis SCA through the Test Vector Leakage Assessment (TVLA) methodology (i.e., Welch's t-test) using an open-source test bench on the Artix-7 FPGA. We also benchmark the RS-Masked AES implementation, and quantify the overhead of the new countermeasure.

Our contributions in this work are as follows:
\begin{enumerate}
    \item We introduce an integrated countermeasure, Random Space Masking (RS-Mask) with provable security against a bounded-complexity adversary leveraging power analysis SCA and absolute SFA attacks.
    \item We implement RS-Mask on the popular AES cipher, and demonstrate its resistance to statistical ineffective fault analysis (SIFA) and power analysis SCA.
    \item We derive conditions for provable security against differential fault attacks, such as DFA and DFIA, and introduce \textit{infective} RS-Mask to provide security against these attacks.
\end{enumerate}

The paper is organized as follows. Section \ref{sec:back} reviews existing countermeasures against power and fault analysis. Section \ref{sec:idea} introduces random space masking. The mathematical derivations and the corresponding hardware implementation is discussed in Section \ref{sec:implementation}. The security proof of the RS-Mask scheme agasint statistical fault analysis is given in Section \ref{sec:proof}. Infective RS-Mask against differential fault analysis is introduced in Section \ref{sec:infective} and the results are shown in Section \ref{sec:results}. The paper concludes in Section \ref{sec:conclusion}.

\section{Background and Related Work} \label{sec:back}
\subsection{Combined Countermeasures for Power and Fault Analysis}
A first solution to protect against both power and fault analysis is to employ separate countermeasures which provide protection against different types of attacks. As an example, \cite{patranabis2019lightweight} employs redundant hardware along with shuffling of cipher operations to achieve protection against both fault and power analysis. It also employs fault space transformation (FST) in which the computations of redundant states are carried out in different domains, thus, making it difficult to induce the same error in the redundant states \cite{patranabis2016fault}. 

Shuffling of computations randomizes the timing of internal operations of a cipher, hence, making it difficult for an attacker to align the power traces. Shuffling of datapath is employed in \cite{li2019securing} to protect an AES implementation against localized EM fault attacks. Spatio-temporal randomization on reconfigurable hardware is also used in \cite{wang2016against} to protect different ciphers such as AES against double fault injections. However, shuffling cannot provide perfect security against SCA. Data-dependent features still exist in the power trace even with randomized timing which can be extracted using machine learning algorithms \cite{picek2017side,hettwer2019applications}. Further, considering fault analysis, shuffling randomizes the fault location and/or timing which can be tolerated in most statistical analysis techniques.

In addition to larger overhead, a major drawback of using separate countermeasures against power and fault analysis is the possibility of a negative impact of countermeasures against one type of attack on the other type, if not designed properly. Several works have shown that fault countermeasures employing concurrent error detection (CED) codes facilitate power analysis \cite{regazzoni2007power, regazzoni2008can}. 
The effect of hardware redundancy and parity-based error detection is also studied in \cite{dofe2016comprehensive}. The results show that both types of countermeasures for fault detection increase the speed of key recovery using CPA. 

\subsection{Masked Redundancy}
To alleviate the negative impact of fault detection on power analysis resistance, \cite{schneider2016parti} employs a concurrent error detection scheme in a threshold implementation, called ParTI. A systematic code is employed to generate shares of parity bits at the input and output of a cipher operation. A predictor logic predicts the correct parity bits at the output of the operation from the input parity shares. The calculated and predicted parity bits are compared to detect an error. Rather than concurrent error detection, Countermeasure Against Physical Attacks (CAPA) \cite{reparaz2018capa} and Masks and MACs (M\&M) \cite{de2019m} employ information-theoretic MACs to detect errors as a result of fault injection.

Coarse-grained error detection, at the level of cipher operations, as employed in ParTI, CAPA and M\&M, are not sufficient to protect against most SFA techniques, such as FSA, SIFA and FIMA. A solution is using error correction codes (ECC) to correct all faulty values. In \cite{cryptoeprint:2019:515}, a 3-repetition Hamming code is employed at the gate level to correct any faulty node in the S-box of GIFT-64 cipher. The area of this scheme increases by a factor of $6\times$ in an ASIC implementation. The countermeasure introduced in \cite{cryptoeprint:2019:515} is only against fault analysis. Further, the ECC itself processes sensitive data and can be a target of fault attack.

Fine-grained error detection schemes, at the level of internal nodes of a logic function, similar to fine-grained ECC, can also be employed to protect against SIFA. Such a scheme is employed in \cite{daemenprotecting} with Toffoli-gates that implement non-linear permutations. In this work, 3-, 4- and 5-bit S-boxes that can be represented as non-linear permutations are protected against single-fault attacks with redundant masked Toffoli-gates. It is shown that in a masked Toffoli-gate, if a fault propagates to the input of at most one AND gate, then the event of a correct computation depends only on one share of the input which are independent of sensitive variables. It is argued that to protect against a fault injection that affects $d$ wires of the logic function, $d+1$ redundant masked Toffoli-gates are required, which incurs a large overhead.

\subsection{Integrated Countermeasures}
An alternative to the error detection/correction schemes is randomizing the effect of fault, while providing SCA security, introduced in \cite{bringer2014orthogonal}, which is called orthogonal direct sum masking (ODSM). In this scheme, the sensitive variables of the cipher are encoded with an orthogonal binary linear code. A random value from the complementary domain of the code is also added to the sensitive variables as a random mask. Exploiting the orthogonality property of the code, the sensitive variables can be recovered from the masked version using generating matrix of the code. By deriving the equivalent operations of the cipher in the encoded/masked domain, the correctness of computations is ensured. As the computations are conducted on masked values, the leakage signals are assumed to be independent of the secret. By adding the random mask, the effect of fault is also randomized, thus, resistance against statistical analysis is achieved.

Although ODSM promised a generic countermeasure against SCA and fault analysis, studies of \cite{barbu2016analysis} demonstrated that ODSM fails to provide the assumed security. The major vulnerability of ODSM arises from the fact that the random mask added to sensitive variables is not perfectly uniform as opposed to Boolean masking, since the mask is chosen from the orthogonal space to the encoded sensitive variables. As a result, the distribution of masked variables cannot be uniform. This property is exploited in \cite{barbu2016analysis} to deploy a first order DPA attack to recover the secret key of AES implemented with ODSM scheme. We will also prove that the non-uniform distribution of a mask makes the scheme vulnerable to fault analysis.

\section{Random Space Masking}  \label{sec:idea}
\subsection{Primary Idea}
The basic idea of RS-Mask is to map the intermediate variables of a cipher to a random space. After operations in the random space, the output is transformed back to the original space of the cipher. As a result, if the random mapping is uniform, the intermediate variables on which the cipher operations are conducted become independent of the sensitive data, hence, the implementation is robust against SCA. When a fault is injected into intermediate variables in a random space, any effect that the fault has on the data depends on random values which are independent from the sensitive data. The main challenge in implementing such a scheme is finding the proper random mapping, such that after performing the non-linear operations of the cipher in the random space, the correct values of the variables can be recovered.

\begin{figure}[t!]
	\vspace{-0.2cm}
	\centering
	\includegraphics[width=0.4\textwidth]{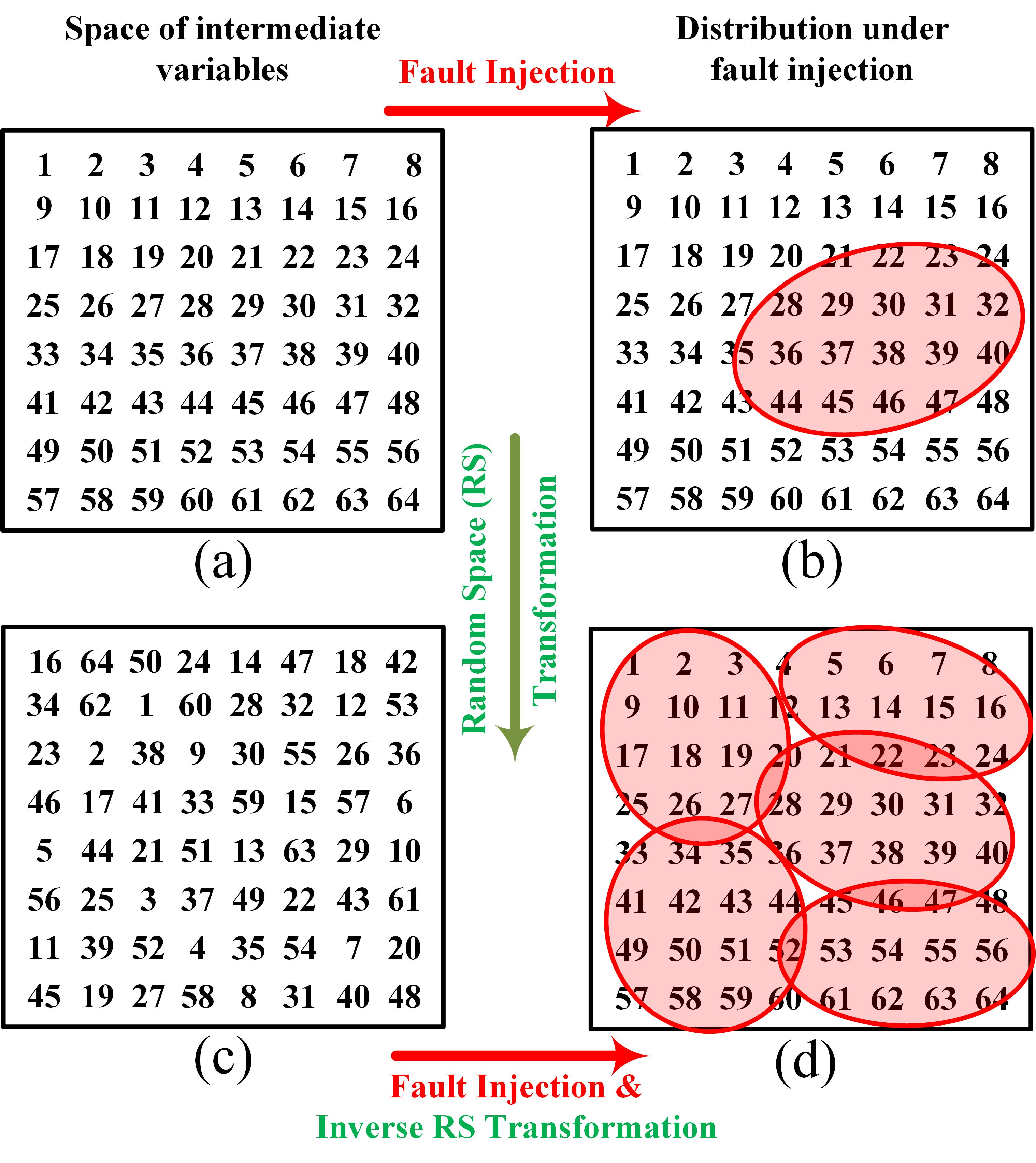}%\par\vspace{-0.5cm}
	\vspace{-0.3cm}
	\caption{Conceptual representation of random space (RS) masking that randomizes the effect of fault; shaded area is the distribution of values under fault injection.}
	\label{fig:rsmap}
	\vspace{-0.4cm}
\end{figure}
The concept of RS-Mask is depicted in Fig. \ref{fig:rsmap}. The effect of a biased fault is shown in part (b) of the figure. Fault injection maps the space of internal variables of a logic function to a particular set of values, which is depicted as the shaded area in the figure. If the input to the function results in the values of internal variables in the shaded area, the fault injection will be ineffective; the result of computations is still correct under fault injection. Depending on the fault type and location, only particular values might result in this set of internal values, hence, the distribution of correct values under fault injection will be non-uniform, or biased. Similarly, distribution of faulty values might also be biased.

In RS-Mask, we map the input of a logic function to any possible value with equal probability, as shown in part (c) of Fig. \ref{fig:rsmap}. As a result, during the operation of the logic function under attack, any input can result in all possible values of internal variables with equal probability. Hence, the distribution of faulty values and correct values under ineffective faults will be uniform.

\subsection{Masking as a Random Mapping}
Masking can be considered as a random mapping for sensitive variables. 
In a masking scheme, a sensitive variable $x$ is split into $d+1$ shares $x_i, i=0,1,\cdots,d$ such that the sum of all $x_i$'s is equal to the correct value $x$ and any $d$ combination of the shares are uniformly distributed. A function $y=f(x)$ is split into $d+1$ function shares $y_i=f_i(x_0,x_1,\cdots,x_d)$ such that the sum of $y_i$'s is equal to the correct result $y$ with any $d$ combination of $y_i$'s uniformly distributed. While linear functions process input shares separately, a non-linear function share $f_i$ is possibly a function of all input shares.

Since, for linear functions, each share of the function depends only on one share of the sensitive variables, if an attacker injects biased faults into any combination of $d$ shares, there is still one correct share with a uniform distribution. Hence, the sum of all shares is still uniformly distributed under fault injection. Therefore, masking can be considered as a sound countermeasure against statistical fault analysis techniques with $d$ faults. 

\begin{figure*}[htbp]
	%\vspace{-0.2cm}
	\centering
	\begin{multicols}{2}
		\includegraphics[width=0.9\textwidth]{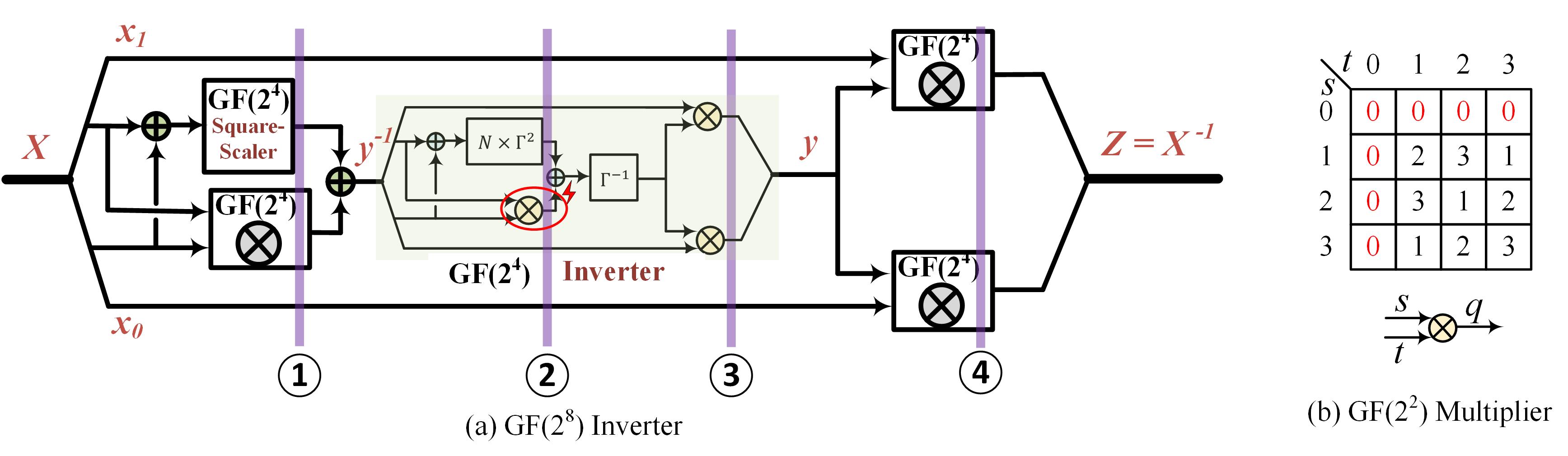}
	\end{multicols}
	\vspace{-0.8cm}
	\caption{Canright composite field implementation of $GF(2^8)$ inverter in AES S-box, (a) inverter block diagram, (b) $GF(2^2)$ multiplier, $q=s\times t$.}
	\label{fig:inverter}
\end{figure*}

However, this argument does not hold for non-linear functions. First, each share of a non-linear function depends on all shares of the input in most masked implementations. Hence, a faulty value of a single input share can propagate to all output shares, possibly making all biased. Second, to induce bias into most non-linear functions, such as S-boxes, it is not necessary to employ a biased fault mechanism; any type of fault can result in a biased distribution of the function output.

The reason a random fault can induce a biased distribution in most non-linear functions is the non-bijective constituent components of a non-linear function. In the case of AES S-box, the Canright composite field computation for the $GF(2^8)$ inverter, as shown in Fig. \ref{fig:inverter}, is popular due to its compact implementation \cite{canright}. The basic non-linear building blocks of this implementation are $GF(2^2)$ multipliers. The truth table of the multiplier is shown in Fig. \ref{fig:inverter} (b), which represents 2-bit multiplication $q=s\times t$ in the normal basis of Canright scheme. The multiplier is a non-bijective function of an input $t$, if $s=0$. This is the origin of the bias at the output of the multiplier even for uniform faults.

The \textit{uniformity} of masking schemes, i.e., uniform distribution of the shares for all intermediate variables, is a fundamental property to provide security against SCA. However, the uniformity property alone does not help in protecting against biased-fault analysis.  One might speculate that both the \textit{non-completeness} and \textit{uniformity} properties can inhibit a bias as a result of single faults, because there exists at least one share of the function which is independent of a given input share, and has a \textit{uniformly} distributed output. As we will show, this argument is true only if the function shares are implemented as atomic operations, and uniformity is not achieved via remasking.

\begin{figure}[t!]
	%\vspace{-0.2cm}
	\centering
	\includegraphics[width=0.4\textwidth]{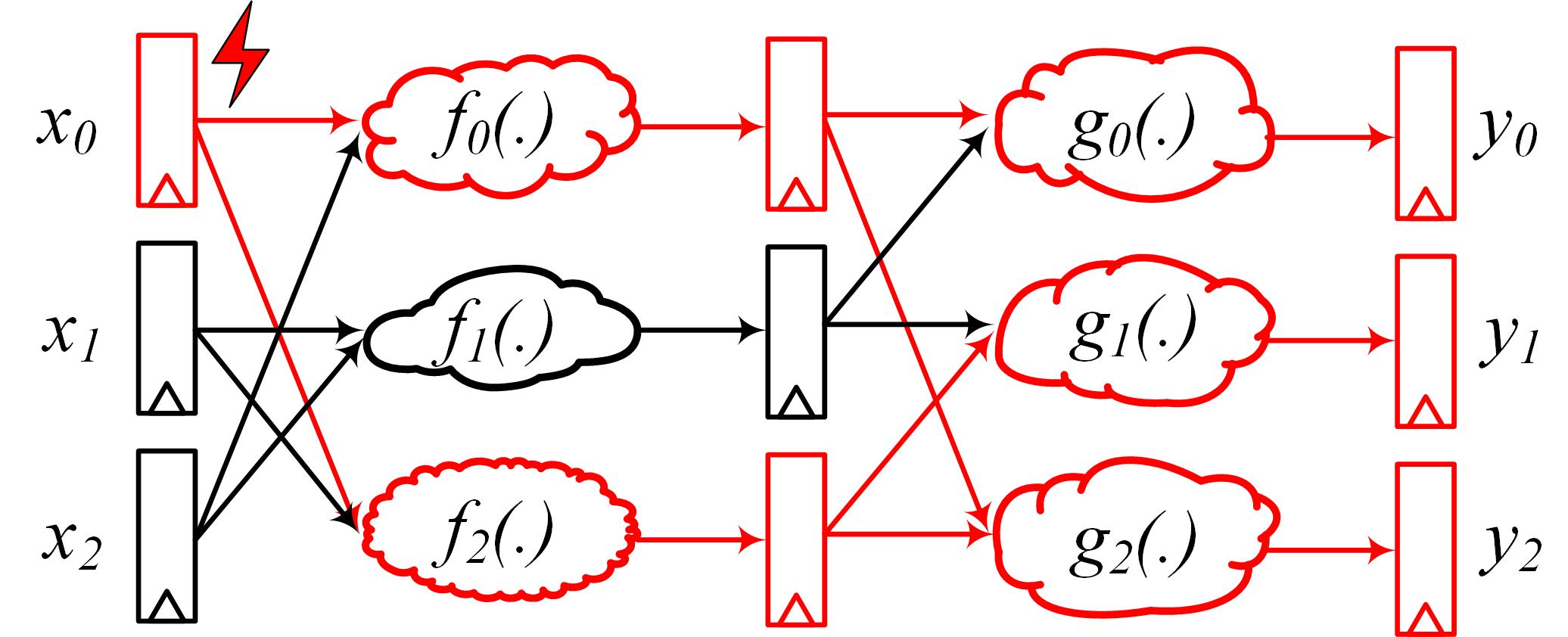}%\par\vspace{-0.5cm}
	\vspace{-0.3cm}
	\caption{Propagation of a fault in a single share of internal variables to all shares of output.}
	\label{fig:prop}
\end{figure}

As most implementations of non-linear functions of a cipher consist of a series of non-linear operations, the final output shares depend on all shares of variables at the early stages of the function, even with the non-completeness property. In Fig. \ref{fig:prop}, a non-linear function is implemented as a series of two components $f(\cdot)$ and $g(\cdot)$, each split into three shares. The sharing also satisfies the non-completeness property; each function share is independent of one share of its input. We observe that a fault injected at a single share of the input propagates to all shares of the output. Due to the non-linearity of the functions, all output shares might be biased.

If a single fault is injected at the input of the function component $g(\cdot)$ in Fig. \ref{fig:prop}, there is at least one share of the output which is fault-free and uniform. One might conclude that the single fault, at this location, cannot induce a bias into the output of $g(\cdot)$. This is true only if the output shares are uniform by the design of function shares $g_i(\cdot), i=0,1,2$. 
Remasking is a popular technique to achieve uniformity, in which uniform refreshing masks are added to the output shares of non-linear functions. However, remasking is not sufficient to protect against biased fault analysis, since the refreshing masks are cancelled out after combining the shares. Further, adding linear combinations of input shares to the output of non-linear shares, as in \cite{nikova2006threshold}, is equivalent to remasking, since these additional terms also cancel out after combining the shares.

In a masking scheme in which the function shares result in uniform output shares by design, and not using remasking, the non-completeness property might help inhibit a bias only if \textit{single} faults are injected at final stages of calculations; faults at early stages can quickly propagate to all output shares, as shown in Fig. \ref{fig:prop}, leading to biased distribution of all shares. Random space masking is a solution to achieve uniform distribution at the output of a bijective non-linear function under any type of fault attack, via pre-randomization of the input to the function. 

\subsection{Random Space (RS) Mapping}
In the proposed RS-Mask scheme, the linear functions of a cipher are protected according to conventional masking schemes. One of the shares, called RS share, is always independent of the sensitive data shares; neither the secret key nor any key-dependent intermediate variables are ever combined with the RS share during cipher operations. The RS share can be further split into multiple shares to increase the complexity of a fault attack. The sub-shares of the RS share can be combined at any stage of the cipher operations to reduce the overhead of computations.

The non-linear function of the cipher, i.e., the S-box in the case of AES, is protected against biased fault analysis using RS mapping. The general block diagram of random mapping is shown in Fig. \ref{fig:rsnonlin}. At the input of the non-linear function $g(\cdot)$, the sensitive variable $X$ is transformed to $X^{'}$ with a random mapping $\mathcal{E}_1$. The correct output $Z$ is recovered from the output $Z^{'}$ by applying the dual mapping $\mathcal{E}_2$.

The necessary and sufficient conditions for the RS mappings $\mathcal{E}_1$ and $\mathcal{E}_2$ to protect the bijective function $g(\cdot)$ against biased fault analysis are as follows:
\begin{itemize}
    \item \textit{Correctness}: We must have $\mathcal{E}_2\circ g(\mathcal{E}_1\circ X) = g(X)$.
    
    \item \textit{Uniformity}: The mapping $\mathcal{E}_1$ must be uniform; i.e, for any value $X\in \mathcal{F}^n$ the result of $\mathcal{E}_1\circ X$ must be any value in $\mathcal{F}^n$ with equal probability.
    
    \item \textit{Robustness}: The mapping $\mathcal{E}_2$ itself must be robust against biased fault analysis; i.e., the distribution of values at the output of $\mathcal{E}_2$ must be uniform under fault injection without remasking.
\end{itemize}

\begin{figure}[t!]
	%\vspace{-0.2cm}
	\centering
	\includegraphics[width=0.45\textwidth]{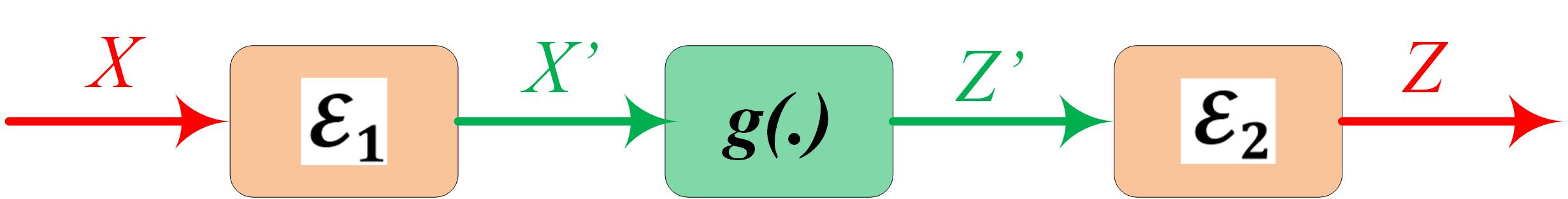}%\par\vspace{-0.5cm}
	%\vspace{-0.3cm}
	\caption{General block diagram of random mapping to protect a bijective non-linear function $g(\cdot)$ against biased fault analysis.}
	\label{fig:rsnonlin}
\end{figure}

The main challenge in designing an RS-Mask for a given cipher is finding the proper mappings $\mathcal{E}_1$ and $\mathcal{E}_2$ that satisfy the above conditions. In this work, we define the random mapping $\mathcal{E}_1$ such that
\begin{equation}
    g(\mathcal{E}_1\circ X)=g(X)\oplus R
\end{equation}
in which $R$ is a uniformly distributed mask. The output is thus comprised of two shares, i.e., the calculated value $Z^{'}$ and a random mask $R$. Such a scheme satisfies the above conditions with the following advantages:
\begin{itemize}
    \item The result of the function $g(\cdot)$ on the transformed input $X^{'}$ is a linear combination of a Boolean mask and the correct output $Z$ which is compatible with traditional masking schemes. Hence, there is no need to convert the output to a masked version required for subsequent steps of the cipher operations.
    
    \item No non-linear operation, such as $\mathcal{E}_2$, is required to recover the correct value from the random space. As a result, the robustness property is satisfied. The uniformity of the masked output is inherent in designing the proper $\mathcal{E}_1$.
\end{itemize}

We notice that, in the proposed RS mapping scheme, fault injection at the input mapping $\mathcal{E}_1$ cannot induce a bias into the correct output $Z$. The calculated output $Z^{'}$ should normally be uniform. If a fault injection, either in $\mathcal{E}_1$ and/or $g(\cdot)$, induces a bias into the distribution of $Z^{'}$, the correct value $Z=Z^{'}\oplus R$ would still be uniform. 

\section{Implementation of RS-Mask}  \label{sec:implementation}
\subsection{Derivation of RS Mapping}
In order to protect the $GF(2^8)$ inverter in the AES S-box against biased fault analysis, we need a random mapping on the input $X$ such that the output is $Z^{'}=X^{-1}\oplus R$ with a uniformly distributed $R$. 
By representing the input and output of the $GF(2^8)$ inverter with their most and least significant nibbles as $X=x_1||x_0$ and $Z=z_1||z_0$, we have $z_1 = x_0\otimes y$ and $z_0 = x_1\otimes y$, in which $y$ is the output of the $GF(2^4)$ inverter as shown in Fig. \ref{fig:inverter}, and $\otimes$ is $GF(2^4)$ multiplier. We find the desired random mapping in two cases of zero and nonzero input values. 

\subsubsection{Zero Input}
We note that the inverse of $X=0$ in the field $GF(2^m)$ is $0$ while the inverse of all nonzero values of $X$ is nonzero. According to the block diagram of Fig. \ref{fig:inverter}, if $y=0$, then the output of the inverter is $0$. Hence, $y=0$ only if $X=0$. Assuming $y\ne 0$, which corresponds to a nonzero input to the $GF(2^8)$ inverter, we can write
\begin{align} \label{eq:map1}
\begin{split}
    z_1^{'} = z_1 \oplus r_1 = [x_0 \oplus (y^{-1}\otimes r_1)]\otimes y \\
    z_0^{'} = z_0 \oplus r_0 = [x_1 \oplus (y^{-1}\otimes r_0)]\otimes y \\
\end{split}
\end{align}
in which $R=r_1||r_0$ is the random mask. The terms in the brackets can be considered as the required mapping to achieve the desired output. The value of $y^{-1}$ is also available at the input of the $GF(2^4)$ inverter.

The mapping in (\ref{eq:map1}) is not sufficient to protect against faults injected at the earlier operations of the inverter. Assume $y^{-1}$ takes a faulty value, denoted by $(y^{-1})^{*}$, then the output of $GF(2^4)$ inverter also takes a faulty value $y^{*}$. Since the operations of $GF(2^4)$ inverter is fault-free, we have $(y^{-1})^{*}=(y^{*})^{-1}$. Using this relation, the faulty value of the output in (\ref{eq:map1}) reads
\begin{equation}
    (z_1^{'})^{*} = [x_0 \oplus ((y^{*})^{-1}\otimes r_1)]\otimes y^{*} = x_0 \otimes y^{*} \oplus r_1
\end{equation}
A similar equation also holds for the least significant nibble of the output. At the end of cipher operations, the mask $r_1$ will be combined with the calculated value $(z_1^{'})^{*}$. Hence, the recovered value will be $x_0 \otimes y^{*}$ which is clearly biased.

To prevent the leakage of a faulty value to the output of the non-masked output $Z$ in (\ref{eq:map1}), we must use a separate datapath for calculating $y^{-1}$ used in the random mapping. According to the block diagram of Fig. \ref{fig:inverter}, we can write 
\begin{equation} \label{eq:yinv}
    y^{-1} = \nu(x_0\oplus x_1)^2 \oplus (x_0\otimes x_1)
\end{equation}
in which $\nu\in GF(2^4)$ is a scaling constant in the Canright scheme. By substituting (\ref{eq:yinv}) into (\ref{eq:map1}), we can derive the desired random mapping. However, to derive a single mapping for all input values we will need an auxiliary variable as discussed later.

\subsubsection{Zero Input}
When $X=0$, then $y=y^{-1}=0$ and the equations in (\ref{eq:map1}) do not hold. Hence, the random mapping in (\ref{eq:map1}) is not correct for zero input. In this case, we use the linearity property of the $GF(2^8)$ for $X=0$, i.e.,
\begin{equation}
    (0 \oplus R)^{-1} = 0^{-1} \oplus R^{-1}
\end{equation}
Using this property, if we add $R^{-1}$ to the input, when it is zero, the output of the inverter will be the masked correct value $Z^{'}=0\oplus R=R$. However, in this case, the mapping in (\ref{eq:map1}) must not be applied during the calculations of the inverter. Similarly, when the input is nonzero, $R^{-1}$ must not be added to the input. The above derivation is explained separately for zero and nonzero inputs for clarity. However, the final mapping processes all input values within the same datapath as discussed below.

\subsubsection{All Input Values}
We use an auxiliary variable $f$ to achieve a single random mapping for all values of the input $X$ defined as 
\begin{equation}
    f = 
    \begin{cases}
    15, & X\ne 0 \\
    0, & X=0
    \end{cases}
    \label{eq:condition}
\end{equation}
Note that the value $15$ is the multiplicative unity in the field $GF(2^4)$. Now, with a random mask $R$, we add the value $\Bar{f}\otimes R$ to the input in which $\Bar{f}$ is the bitwise inverse of $f$. In this case, the value of $R^{-1}$ is added to the input only if $X=0$. Similarly, by combining (\ref{eq:map1}) and (\ref{eq:yinv}) with $f$, we get the final random mapping as
\begin{align} \label{eq:fmap}
\begin{split}
    z^{'}_1 = \big(x_0 \oplus [(x_0\oplus x_1)^2\otimes \nu r_0] \oplus [(x_0\otimes r_0)\otimes x_1]\otimes f\big)\otimes y \\
    z^{'}_0 = \big(x_1 \oplus [(x_0\oplus x_1)^2\otimes\nu r_1] \oplus [(x_1\otimes r_1)\otimes x_0]\otimes f\big)\otimes y
\end{split}
\end{align}

\subsection{Hardware Implementation}
\begin{figure*}
	%\vspace{-0.2cm}
	\centering
	\begin{multicols}{2}
		\includegraphics[width=1\textwidth]{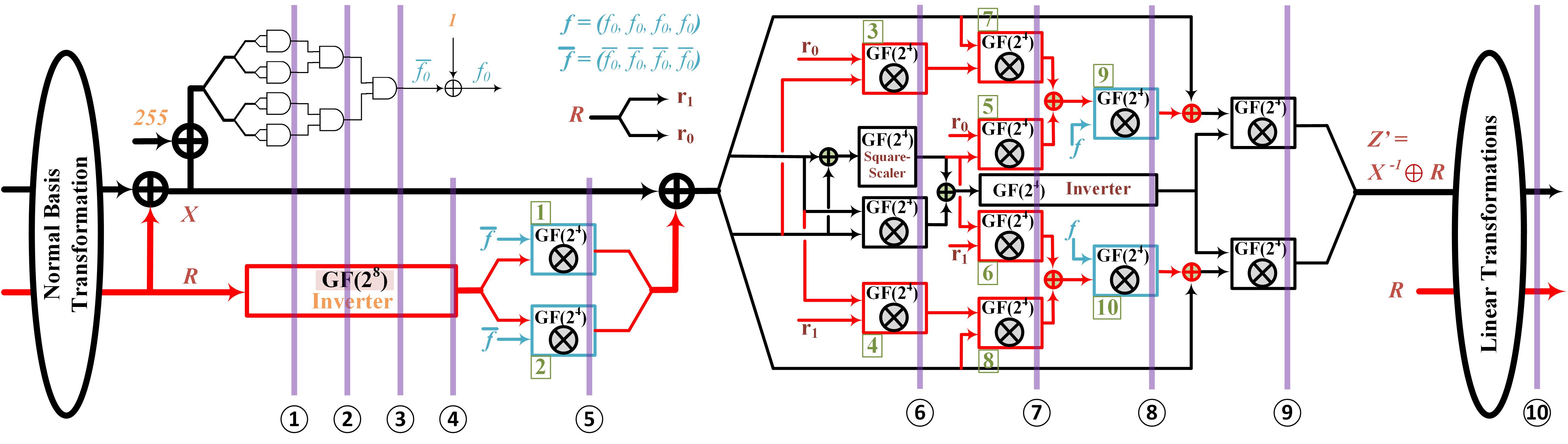}%\par\vspace{-0.5cm}
	\end{multicols}
	\vspace{-0.8cm}
	\caption{RS-Mask Implementation of AES S-box, according to Canright composite field implementation, with a uniformly distributed mask $R$ for random mapping; Red connections represent additional computations required for RS mapping; Numbered vertical lines represent pipeline stages.}
	\label{fig:sbox}
\end{figure*}
The overall block diagram of the AES S-box with composite field implementation in the RS-Mask scheme is shown in Fig. \ref{fig:sbox}. The $GF(2^4)$ multipliers are implemented as a parallel of three $GF(2^2)$ multipliers as shown in \cite{schneider2016parti}. All AND-gates and multipliers in Fig. \ref{fig:sbox} are masked according to TI with three shares. To protect against SCA in the presence of glitches, the output of all non-linear components are registered. The output \textit{linear transformations} include the combined affine transformation and normal basis conversion as well as the \textit{MixColumn} operation of AES. After four bytes of state reach register level 10, the result of one column of the state is ready by applying \textit{MixColumn} and \textit{AddRoundKey}.

In our implementation, the entire state of AES is split into three shares. One of the shares is the RS share $R$. The round keys are also split into two shares with the RS share set to $0$ to save extra randomness and registers. Since fault analysis techniques induce a bias into the distribution of state bytes, it is not necessary to mask the round keys with an RS share. As observed in Fig. \ref{fig:sbox}, the RS share is combined with one share of the data at the input of the S-box, after normal basis conversion. The output of the S-box consists of the processed data and the RS share. At the end of a round, the two shares of the round keys are added to the two shares of data excluding the RS share.

The structure of the $GF(2^8)$ inverter in Fig. \ref{fig:sbox} that processes the random mask $R$ is the same as in Fig. \ref{fig:inverter}. This inverter is implemented with only one share as $R$ is independent of any sensitive variable. In order to prevent an attacker from inferring the value of $R$ by observing the number of glitches, the output of non-linear components in this inverter are also registered, as shown in Fig. \ref{fig:inverter}.

There are two types of $GF(2^4)$ multipliers in the RS-Mask scheme of Fig. \ref{fig:sbox}. In one type, i.e., multipliers 1 to 6, an intermediate variable, with three shares, is multiplied by a random mask which has only one share. In other multipliers, both inputs have three shares. For masking multipliers with a single-share input, the overhead increases linearly with the number of shares, since $x\otimes r = (\Sigma_i x_i)\otimes r = \Sigma_i(x_i\otimes r)$.

The random mapping of the proposed scheme, in Fig. \ref{fig:sbox}, requires four extra multipliers with three shares at both inputs. The overhead of masking six multipliers with a single-share input is equivalent to two fully masked multipliers with three shares. The overall overhead of the random mapping is thus equivalent to six fully masked $GF(2^4)$ multipliers. Including the extra single-share $GF(2^8)$ inverter for the RS mask and the logic for calculating the auxiliary variable $f$, the overall overhead of the RS-Mask S-box is almost equivalent to $3\times$ the area of a TI S-box with the same number of shares.

According to Fig. \ref{fig:sbox}, nine register levels are required, after every stage of non-linear operations, to achieve security against SCA in the presence of glitches. The output shares of every non-linear operation are remasked before being stored in the registers. An extra register level is required at the output of S-box calculations to perform \textit{MixColumn} operation on the bytes of a column. In order to maintain throughput, the AES algorithm is implemented in a 10-stage pipelined design. The same instance of the S-box in Fig. \ref{fig:sbox} is reused for calculating the state bytes and the round keys.

The entire AES encryption algorithm with three shares of the state according to the RS-Mask scheme is implemented on Artix-7 FPGA. A comparison of an unprotected AES, a TI implementation secure against only SCA, and the proposed RS-Mask AES with three shares, secure against both SCA and statistical fault analysis is shown in Table~\ref{tab1}. The relative cost of TI and RS-Mask implementations versus unprotected AES is also given in Table \ref{tab2}. Frequency and area are optimized in Xilinx Vivado using the Minerva automated optimization tool \cite{farahmand2017minerva}. The TI implementation of AES in these tables, available at \cite{aesti}, has two shares for the linear operations and three shares for S-box calculations.

\begin{table}[t!]
\centering
\caption{Comparison of unprotected, TI and RS-Mask AES.}\label{tab1}
\begin{tabular}{|c|c|c|c|}
\hline
\quad Result of Optimization \quad & \quad AES \quad & TI AES & RS-Mask AES \\
\hline\hline
Max frequency [MHz] & 242 & 195 & 218 \\
\hline
LUT & 508 & 1373 & 2273 \\
\hline
Slices & 264 & 583 & 827 \\
\hline
Throughput [Mbps] & 151.1 & 121.8 & 116.8 \\
\hline
Power [mW] @ 20MHz & 31.8 & -- & 44.3 \\
\hline
Energy [nJ/bit] & 2.55 & -- & 4.14 \\
\hline
\end{tabular}
\end{table}

\begin{table}[t!]
\centering
\caption{Overhead of TI and RS-Mask versus unprotected AES.}\label{tab2}
\begin{tabular}{|c|c|c|}
\hline
Cost & \quad\quad TI AES \quad\quad & RS-Mask AES \\
\hline\hline
Decrease in frequency [\%] & 19.4 & 9.9 \\
\hline
Decrease in TPA [\%] & 70.2 & 82.7 \\
\hline
Increase in Area [\%] & 170.3 & 347.4 \\
\hline
Increase in Total Power [\%] & -- & 39.3 \\
\hline
Increase in Dynamic Power [\%] & -- & 375 \\
\hline
\end{tabular}
\end{table}

As observed in Tables \ref{tab1} and \ref{tab2}, the major overhead of RS-Mask is the increase in area and \textit{dynamic} power. The overall area of the AES implementation with RS-Mask is around $3.5\times$ the area of an unprotected AES and $2\times$ the TI AES. The RS-Mask AES consumes 44.3 mW of total power (31 mW static power) measured at 20 MHz on CW305 Artix-7 target board compared with 31.8 mW (29 mW static) power consumption of unprotected AES. With the 10-stage pipeline design of RS-Mask, the throughput is degraded by around 22\% which is mainly due to the decrease in the operation frequency, but also due to the number of clock cycles of RS-Mask increasing to 239 per block, vice 205 per block in the AES unprotected and AES TI. The decrease in throughput per area (TPA) of the RS-Mask scheme is $82.7\%$ which is mainly due to the large area cost, while, the TPA degradation of TI implementation is $70.2\%$, close to that of RS-Mask. 

\section{Security Proof}  \label{sec:proof}
The security of the proposed RS-Mask scheme against passive SCA is based on the security promises of TI masking schemes, as all building blocks of the RS mapping are implemented according to TI. We make no further claims on the security of RS-Mask against power analysis. In this section, we prove the security of the proposed RS-Mask scheme against \textit{absolute} statistical fault analysis. Further, we derive conditions for security of a countermeasure against \textit{differential} fault analysis.

We categorize all statistical fault analysis techniques into two classes of \textit{absolute} and \textit{differential} analysis. We define \textit{absolute} fault analysis as any technique that employs the distribution of an intermediate variable of a cipher under fault injection as the observable. The \textit{differential} fault analysis is defined as any technique that observes the distribution of the difference induced in an intermediate variable as a result of fault injection. The absolute and differential fault analysis techniques are analogous to the classical linear and differential cryptanalysis. In the classical cryptanalysis, the bias is due to the imperfect design of the non-linear operations in a cipher, while, in fault analysis it is the result of fault injection.

The following theorem proves the security of RS-Mask against absolute fault analysis.
\begin{theorem} \label{thm1}
Consider an arbitrary random Boolean variable $X$ and an independent Boolean mask $R$. The mutual information between the masked variable $Z=X\oplus R$ and $X$ is zero if and only if $R$ is uniformly distributed.
\end{theorem}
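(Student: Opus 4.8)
The plan is to compute the mutual information directly over the finite alphabet $\mathcal{F}^n$ of $n$-bit values (so $|\mathcal{F}^n|=2^n$) using the decomposition $I(X;Z)=H(Z)-H(Z\mid X)$. The first step I would take is the reduction $H(Z\mid X)=H(R)$: since $X$ and $R$ are independent and, for each fixed $x$, the map $r\mapsto x\oplus r$ is a bijection, one has $H(Z\mid X=x)=H(x\oplus R)=H(R)$, and averaging over $x$ gives $H(Z\mid X)=H(R)$. Hence $I(X;Z)=H(Z)-H(R)$. I would also keep in mind the equivalent characterization $I(X;Z)=0 \iff X$ and $Z$ are independent, which is the more convenient form for the ``only if'' direction.

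For the ``if'' direction, I would assume $R$ uniform and argue in one-time-pad style: for every $x$ with $\Pr[X=x]>0$ and every $z$, $\Pr[Z=z\mid X=x]=\Pr[R=z\oplus x]=2^{-n}$, which does not depend on $x$; thus $Z$ is uniform and independent of $X$, so $I(X;Z)=0$. Equivalently, $Z$ uniform gives $H(Z)=n=H(R)$ and $I(X;Z)=H(Z)-H(R)=0$.

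For the ``only if'' direction, I would assume $I(X;Z)=0$, i.e.\ $Z\perp X$, so that $\Pr[R=z\oplus x]=\Pr[Z=z]$ for all $z$ and all $x$ in the support of $X$. Fixing $z$ and varying $x$ shows the pmf of $R$ is invariant under translation by any difference $x_1\oplus x_2$ with $x_1,x_2\in\operatorname{supp}(X)$; when the support of $X$ affinely spans $\mathcal{F}^n$ --- which is the intended reading of ``arbitrary $X$,'' and in particular covers the case where the conclusion is asked to hold for every $X$ --- these translations generate all of $\mathcal{F}^n$, forcing $\Pr[R=w]$ to be constant in $w$, i.e.\ $R$ uniform.

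I expect the only genuine subtlety to lie in this last step, and specifically in the quantifier over $X$: the equivalence is false for a fixed degenerate $X$ (a constant $X$ makes $I(X;Z)=0$ for any $R$ whatsoever), so the proof must pin down the sense in which $X$ is ``arbitrary'' --- either quantifying universally over $X$, or requiring $\operatorname{supp}(X)$ to affinely span $\mathcal{F}^n$ --- and then close the argument with the translate/coset observation above (equivalently, a Walsh--Hadamard argument showing $\widehat{p_R}$ is supported only at the trivial character). The remaining manipulations of entropies and conditional distributions are routine.
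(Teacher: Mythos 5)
Your proof is correct, and the ``only if'' direction takes a genuinely different route from the paper's. The forward direction is essentially the same in both (the one-time-pad computation showing $Z$ is uniform and independent of $X$; the paper phrases it as a circular convolution of pmfs). For the converse, the paper argues by contradiction at the level of entropies: from $I(Z;X)=0$ it derives $H(Z)=H(R)$, combines this with $H(Z)\ge H(Z|R)=H(X)$, and then invokes the arbitrariness of $X$ to choose $H(X)>H(R)$ (possible precisely when $R$ is non-uniform, e.g.\ take $X$ uniform), yielding a contradiction. You instead work directly with the distributions: independence of $Z$ and $X$ forces $p_R$ to be invariant under translation by any difference of points in $\operatorname{supp}(X)$, and a spanning support then forces $p_R$ to be constant. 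The paper's argument is shorter and needs only a single well-chosen $X$; yours is more informative, since it isolates exactly which $X$ make the converse go through (affinely spanning support) and makes explicit the degenerate counterexample (constant $X$ gives $I(X;Z)=0$ for every $R$) that the paper only implicitly sidesteps with the phrase ``we can choose $H(X)>H(R)$.'' Both readings require the same universal quantification over $X$, so there is no gap in either; your treatment simply surfaces the quantifier issue that the paper leaves tacit.
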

\begin{proof}
First, assume the random mask $R$ is uniformly distributed. Considering $m$-bit Boolean variables, we can derive the probability mass function (pmf) of $Z$ as a circular convolution of the pmf of $X$ and $R$, i.e.,
\begin{equation}
    p_Z(z) = \sum_{r} p_R(r)\cdot p_X(z\oplus r) = \frac{1}{2^m}\sum_{r} p_X(z\oplus r)
\end{equation}
in which the second equality is due to the uniform distribution of $R$. Since $\sum_x p_X(x)=1$, we have $p_Z(z)=1/2^{m}$. Hence, $Z$ is also uniformly distributed. Now, the mutual information between $Z$ and $X$ can be written as
\begin{equation} \label{eq:mutual}
    I(Z;X) = H(Z) - H(Z|X) = m - m = 0
\end{equation}
where $H(\cdot)$ is the Shannon entropy. Since $Z$ is uniformly distributed, $H(Z)$ takes the maximum value which is equal to $m$ for an $m$-bit variable. Further, given $X$, the distribution of $Z$ is equal to the distribution of $R$ which is also uniform; hence, $H(Z|X)=m$.

To prove the converse of the theorem, we consider a non-uniform variable $R$ and we show that it is not possible that $I(Z;X)=0$. Assume $I(Z;X)=0$. From the definition of mutual information, $I(Z;X)=0$ implies that $H(Z)=H(Z|X)$. Additionally, we know that
\begin{align} \label{eq:prop1}
\begin{split}
    &H(Z|X) = -\sum_x P(X=x) \cdot \big[ \\ 
    &\sum_z P(R=z\oplus x|X=x)\cdot \log\big(P(R=z\oplus x|X=x)\big) \big] 
\end{split}
\end{align}
which is equal to $H(R|X)$ by definition. Since $R$ and $X$ are independent, $H(R|X)=H(R)$. Hence, we conclude that 
\begin{equation} \label{eq:equalentropy}
    H(Z) = H(Z|X)=H(R|X)=H(R)
\end{equation}
On the other hand, from the property that \textit{conditioning reduces entropy}, we have
\begin{align} \label{eq:prop2}
\begin{split}
    H(Z) \ge H(Z|X)=H(R|X)=H(R) \\
    H(Z) \ge H(Z|R)=H(X|R)=H(X) \\
\end{split}
\end{align}
The second equation can be proved the same way as in (\ref{eq:prop1}). From (\ref{eq:prop2}), we conclude that $H(Z)\ge \max\{H(X),H(R) \}$. Since $X$ is an arbitrary random variable, we can choose $H(X)>H(R)$. Hence, $H(Z)>H(R)$. But from (\ref{eq:equalentropy}), we must have $H(Z)=H(R)$ which is a contradiction.
\end{proof}

According to Theorem \ref{thm1}, since the calculated output $Z^{'}=X^{-1}\oplus R$ in Fig. \ref{fig:sbox} is the sensitive variable $X^{-1}$ masked with a uniformly distributed variable $R$, the output is independent of the sensitive variable. To achieve the claimed security, the distribution of the RS mask must be uniform. As a result, by observing the distribution of the calculations at the output of the S-box, no information can be retrieved about the sensitive variables. Similarly, for linear operations of the cipher, masked with $d+1$ shares, observation of at most $d$ variables leaks no information about the sensitive data as there is at least one share that works as a uniform mask.
However, a differential fault analysis technique can still retrieve information about the secret data. The following theorems demonstrate this claim.

\begin{prop} \label{prop1}
Consider a bijective non-linear function $X=S(Y)$ such that for a random variable $Y$ and an independent constant $K$, $X=S(Y\oplus K)$ is uniformly distributed for any nonzero $K$. Consider non-equal random variables $Y_1$ and $Y_2$ such that the distribution of $S(Y_1)\oplus S(Y_2)$ is non-uniform. The distribution of $S(Y_1\oplus K)\oplus S(Y_2\oplus K)$ is uniform if and only if $K\ne0$.
\end{prop}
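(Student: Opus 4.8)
The plan is to compress the statement into one clean fact about the differential behaviour of $S$ after its input has been randomised by the shared mask, and then read off both directions of the ``iff'' from that fact. Throughout I work with $n$-bit variables over $\mathcal{F}^n$, and I treat $K$ as the single RS mask added to the \emph{common} input of the two S-box evaluations: the degenerate value $K=0$ is the ``no-mask'' case, and $K\neq0$ the genuine (uniform, non-degenerate) mask. The first move is the change of variables $U=Y_1\oplus K$. Because a single uniform mask is shared by both evaluations, the second argument becomes $Y_2\oplus K=U\oplus\Delta$ with $\Delta:=Y_1\oplus Y_2$, and the observable rewrites as $W=S(U)\oplus S(U\oplus\Delta)$. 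The purpose of this step is to transfer all secret dependence onto $\Delta$ alone: by the uniform-mask independence underlying Theorem~\ref{thm1}, a uniform $K$ independent of $(Y_1,Y_2)$ makes $U=Y_1\oplus K$ uniform and independent of $(Y_1,Y_2)$, hence independent of $\Delta$.

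Next I would compute the conditional law of $W$ given $\Delta=\delta$ from the differential profile of $S$. Writing $T_S(\delta,w)=\#\{u:S(u)\oplus S(u\oplus\delta)=w\}$, the independence of $U$ from $\Delta$ gives $\Pr(W=w\mid\Delta=\delta)=T_S(\delta,w)/2^{n}$. For $\delta=0$ this is the point mass $[w=0]$, while for $\delta\neq0$ the uniformity hypothesis on $S$ (read as: randomising the input yields a uniform output difference under any nonzero shift) forces $T_S(\delta,w)=1$ for every $w$. Averaging over $\Delta$ then yields
\begin{equation}
\Pr(W=w)=\Pr(\Delta=0)\,[w=0]+\bigl(1-\Pr(\Delta=0)\bigr)\,2^{-n},
\end{equation}
so that $W$ is uniform exactly when $\Pr(\Delta=0)=0$, i.e. when $Y_1$ and $Y_2$ are almost surely non-equal, which is one of the standing assumptions.

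With this identity in hand both directions follow. For $K\neq0$ the reparametrisation is valid, $\Pr(\Delta=0)=0$ holds by non-equality, and $W$ is uniform. For $K=0$ the randomisation collapses ($U=Y_1$ need not be uniform), and $W$ reduces to $S(Y_1)\oplus S(Y_2)$, which is non-uniform by assumption; this is the contrapositive that closes the ``only if'' direction. I would also note that the completeness identity $\sum_{\delta}T_S(\delta,w)=2^{n}$ (each input has a unique output difference) gives an alternative route to the averaged formula, useful as a sanity check.

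The step I expect to be the main obstacle is the flatness claim $T_S(\delta,\cdot)\equiv 1$ for nonzero $\delta$ — that is, turning the stated uniformity hypothesis into genuine uniformity of the \emph{randomised differential}. A generic bijection has biased differential rows, so this cannot hold for an arbitrary $S$; the argument must lean squarely on the assumed property of $S$ together with the independence of $U$ from $\Delta$ that the shared uniform mask provides, and I would be careful to invoke the differential form of the hypothesis rather than the much weaker marginal uniformity of each masked output. By contrast the converse is routine: forcing $K=0$ simply removes the randomisation and re-exposes the biased difference $S(Y_1)\oplus S(Y_2)$, with no nonzero shift present to flatten it.
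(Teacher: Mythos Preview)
Your proposal contains two genuine gaps.

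First, you re-interpret $K$ as a uniform random mask, but the proposition fixes $K$ as a constant (``independent constant $K$'', ``for any nonzero $K$''), and the paper's proof treats it that way. With $K$ constant, your change of variables $U=Y_1\oplus K$ is a deterministic shift of $Y_1$, so $U$ is certainly \emph{not} independent of $\Delta=Y_1\oplus Y_2$, and the key factorisation $\Pr(W=w\mid\Delta=\delta)=T_S(\delta,w)/2^{n}$ no longer holds: you are not averaging over a uniform $U$ independent of the difference. Second, even granting a uniform independent $U$, the flatness claim $T_S(\delta,\cdot)\equiv 1$ is impossible for any $S$ on $\mathcal{F}^n$ with $n\ge 1$: if $u$ solves $S(u)\oplus S(u\oplus\delta)=w$ then so does $u\oplus\delta$, so every DDT entry is even. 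The best attainable is $T_S(\delta,w)\in\{0,2\}$, which is still biased. You flag this as the ``main obstacle'', but it is not an obstacle the hypothesis on $S$ can remove --- it is a parity obstruction, so the argument cannot close along this line.

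The paper's route avoids the differential table entirely and is much shorter. It invokes the stated hypothesis directly to get that $X_1=S(Y_1\oplus K)$ and $X_2=S(Y_2\oplus K)$ are each marginally uniform when $K\neq 0$, and then appeals to Theorem~\ref{thm1} (with one of the $X_i$ playing the role of the uniform mask for the other) to conclude that $X_1\oplus X_2$ is uniform; the converse is the trivial $K=0$ case. You might note that this application of Theorem~\ref{thm1} tacitly assumes independence of $X_1$ and $X_2$, which the hypotheses do not supply, so the paper's argument is itself heuristic at exactly this spot --- consistent with the surrounding text, which frames the proposition as a statement of the wrong-key randomisation property of well-designed ciphers rather than a theorem about generic bijections. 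In any case, the intended proof does not pass through the DDT and never needs the row-flatness your plan relies on.
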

\begin{proof}
Assume $K\ne 0$. From the definition of the function $S(\cdot)$, we know that $X_1=S(Y_1\oplus K)$ and $X_2=S(Y_2\oplus K)$ are uniformly distributed. From Theorem \ref{thm1}, we can conclude that $X_1\oplus X_2$ is independent of both $X_1$ and $X_2$, thus, is uniformly distributed.

Proof of the converse is trivial. When $K=0$, the difference $S(Y_1)\oplus S(Y_2)$ is non-uniformly distributed as stated in the proposition.
\end{proof}

The above proposition is a statement of the fundamental property of well-designed ciphers. For the example of AES, we can represent the relationship between one byte of the output ciphertext, i.e., $C_i$, and the corresponding byte at the output of round 9, $X_i$, as $X_i=S(C_i\oplus K_{i})$, in which, $K_{i}$ is a constant related to the round key 10 and $S(\cdot)$ is the inverse of the S-box. Although the linear operations of the cipher are discarded, this relation provides a sound model for the non-linear behavior of the cipher. If an incorrect value of the round key 10 is used to calculate the intermediate variables at round 9 from the output ciphertext, the distribution of calculated values will always be uniform.

\begin{theorem} \label{thm2}
Consider arbitrary random variables $Y_1$ and $Y_2$ with a non-uniform difference $S(Y_1)\oplus S(Y_2)$, in which $S(\cdot)$ is a bijective non-linear function with the properties stated in Proposition \ref{prop1}. The mutual information $I(X_1;X_2)$, with $X_1=S(Y_1\oplus K)$ and $X_2=S(Y_2\oplus K)$, is nonzero if and only if $K=0$.
\end{theorem}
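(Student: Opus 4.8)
The plan is to establish the two directions of the biconditional separately, using Proposition~\ref{prop1} and Theorem~\ref{thm1} as the main tools.

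\emph{Forward direction} ($I(X_1;X_2)\neq 0\Rightarrow K=0$), proved by contrapositive. Assume $K\neq 0$. By the hypothesis on $S$, each of $X_1=S(Y_1\oplus K)$ and $X_2=S(Y_2\oplus K)$ is uniformly distributed, and by Proposition~\ref{prop1} the difference $D:=X_1\oplus X_2$ is uniformly distributed; moreover the argument inside the proof of Proposition~\ref{prop1} shows that $D$ is in fact independent of $X_1$ (and of $X_2$). Since for each fixed value of $X_1$ the map $X_2\mapsto X_1\oplus X_2$ is a bijection, $H(X_2\mid X_1)=H(D\mid X_1)$; using independence of $D$ and $X_1$, this equals $H(D)=m$ for $m$-bit variables. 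Hence $I(X_1;X_2)=H(X_2)-H(X_2\mid X_1)=m-m=0$. Alternatively one verifies directly that $P(X_1=a,X_2=b)=P(X_1=a)\,P(D=a\oplus b)=2^{-m}\cdot 2^{-m}=P(X_1=a)\,P(X_2=b)$, so $X_1$ and $X_2$ are independent.

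\emph{Reverse direction} ($K=0\Rightarrow I(X_1;X_2)\neq 0$), proved by contradiction. Suppose $K=0$ but $I(X_1;X_2)=0$, so that $X_1=S(Y_1)$ and $X_2=S(Y_2)$ are independent. In this regime $X_1$ and $X_2$ are each individually uniformly distributed, so the pmf of $X_1\oplus X_2$ is an XOR-convolution of two pmfs, one of which is uniform; exactly as in the first half of the proof of Theorem~\ref{thm1}, such a convolution is the uniform pmf. This contradicts the standing hypothesis that $S(Y_1)\oplus S(Y_2)$ is non-uniform, so $I(X_1;X_2)\neq 0$ when $K=0$.

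The step I expect to be the main obstacle is justifying the two ``marginal'' facts the reductions rest on. First, that $D=X_1\oplus X_2$ is not merely uniform but genuinely independent of $X_1$ when $K\neq 0$: this is the ingredient that upgrades ``uniform difference'' (Proposition~\ref{prop1}) to ``zero mutual information'', and it has to come from the structure of $S$, $Y_1$ and $Y_2$, since uniform marginals together with a uniform XOR do not by themselves force pairwise independence. Second, that in the $K=0$ case $X_1$ and $X_2$ are each uniformly distributed --- this captures the modeling assumption that, for random cipher inputs, each round-$9$ state byte is marginally uniform even though a fault makes the pair dependent and biases their difference. With these in hand, everything else is the routine information-theoretic bookkeeping --- bijective change of variables, ``conditioning reduces entropy'', and the uniform-mask convolution --- already exercised in the proof of Theorem~\ref{thm1}.
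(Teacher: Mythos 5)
Your proposal follows the paper's overall strategy---both directions are reduced to Proposition~\ref{prop1} and Theorem~\ref{thm1}---but the $K=0$ direction is argued by a genuinely different route, and yours is the more careful of the two. For $K\ne 0$ you and the paper agree: Proposition~\ref{prop1} gives a uniform difference $\Delta=X_1\oplus X_2$, and Theorem~\ref{thm1} applied to $X_2=X_1\oplus\Delta$ yields $I(X_1;X_2)=0$. You are right that the real obstacle here is that Theorem~\ref{thm1} also requires $\Delta$ to be \emph{independent} of $X_1$; the paper takes this from its proof of Proposition~\ref{prop1}, where it is itself obtained by a circular appeal to Theorem~\ref{thm1} (setting $R=X_2$ presupposes exactly the independence in question), so flagging it as an unproven structural fact about $S$, $Y_1$, $Y_2$ is the correct diagnosis. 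For $K=0$ the paper again writes $X_2=X_1\oplus\Delta$ with $\Delta$ non-uniform and invokes the converse of Theorem~\ref{thm1}, which silently assumes $\Delta$ independent of $X_1$---an assumption that is essentially false in the fault setting, where the error is data-dependent by design. You instead argue by contradiction: if $X_1$ and $X_2$ were independent and one marginal uniform, the XOR-convolution $X_1\oplus X_2$ would be uniform, contradicting the hypothesis. That trade is favorable: it replaces the dubious independence-of-the-error assumption with a mild marginal-uniformity assumption at $K=0$, which you state explicitly rather than leaving implicit. In short, your proof reaches the same conclusion while exposing, rather than hiding, the two auxiliary hypotheses the result actually rests on.
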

\begin{proof}
When $K=0$, it is known that the difference $\Delta =X_1\oplus X_2$ is non-uniformly distributed as stated in the theorem. We have $X_2=X_1\oplus\Delta$. According to Theorem \ref{thm1}, since $\Delta$ is non-uniformly distributed, the mutual information $I(X_1;X_2)$ is nonzero.

To prove the converse, assume $K\ne 0$. According to Proposition \ref{prop1}, the distribution of the difference $\Delta=X_1\oplus X_2$ is uniformly distributed. By representing $X_2=X_1\oplus\Delta$ and from Theorem \ref{thm1}, we conclude that $I(X_1;X_2)=0$.
\end{proof}

The above theorem is a statement of differential fault analysis. Assume $X_1$ and $X_2$ represent the correct and faulty values of an intermediate variable with a biased difference. An attacker has access to the correct and faulty outputs of the cipher, i.e., $Y_1$ and $Y_2$, respectively. To calculate the intermediate variables, the secret key $K$ is required as $X_i=S(Y_i\oplus K), i=1,2$. From Theorem \ref{thm2}, we know that the mutual information between the correct and faulty values is nonzero only for the correct key. If we consider the mutual information as a test statistic for ranking key candidates, the correct key exhibit the highest rank; thus, it can be identified.

\section{Infective RS-Mask}  \label{sec:infective}
While the RS-Mask scheme of Fig. \ref{fig:sbox} randomizes intermediate variables, there is no guarantee that the distribution of differences induced by the fault is also uniform. Hence, differential fault analysis techniques are still able to attack this scheme. Among all SFA techniques, the statistical DFA in \cite{lashermes2012dfa} and DFIA are differential analysis. They both observe the bias in the error distribution. All other SFA techniques, including SIFA and FIMA, are absolute techniques which observe the distribution of the intermediate variables. In this section, we introduce \textit{infective} RS-Mask as an extension, that additionally provides security against differential FA.

The intrinsic redundancy of RS-Mask in Fig. \ref{fig:sbox} can be exploited to detect an error with a small overhead. The output of the RS-Mask S-box is $Z^{'}=Z\oplus R$. We can also calculate $Z$ by using two additional $GF(2^4)$ multipliers as shown in Fig. \ref{fig:redundant}. The difference between $Z'$ and $Z$ must be equal to $R$ if there is no error. The values $x_0$, $x_1$ and $y$ are available from the calculations of Fig. \ref{fig:sbox}. Having a random mask $R_1$ with uniform distribution, we can add $E\times R_1$ to the output of the S-box. If the error is nonzero, the distribution of $E\times R_1$ is uniform; otherwise, $E\times R_1=0$. Hence, if an error occurs, the values of the state bytes are randomized, or infected. With uniform distribution of errors, no differential SFA technique can be successful. The multiplication $E\times R_1$ can also be optimally implemented to have an overhead equivalent to $2\times$ a masked $GF(2^4)$ multiplier with a single-share input.
\begin{figure}[t!]
	\vspace{-0.3cm}
	\centering
	\includegraphics[width=0.4\textwidth]{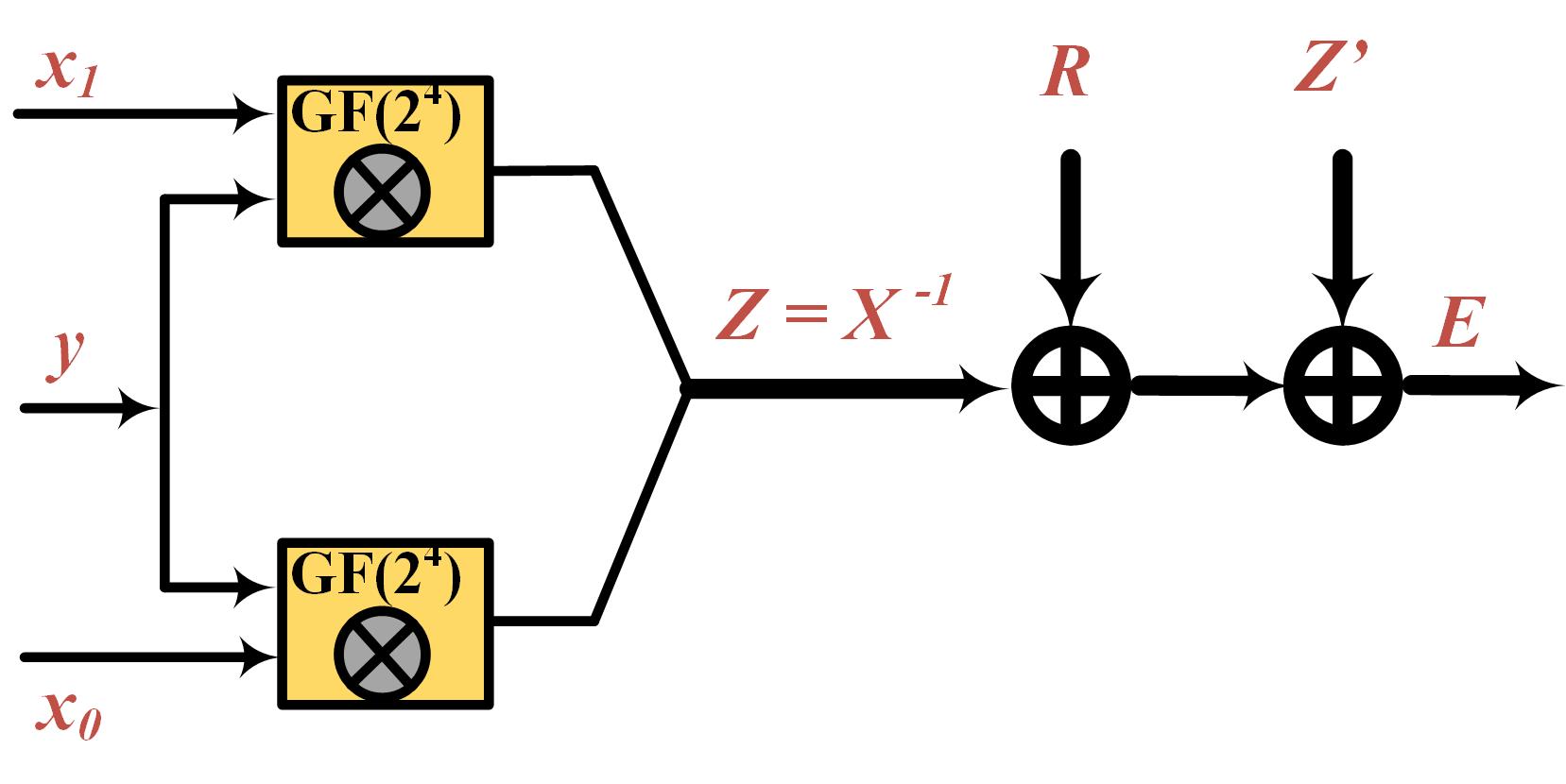}%\par\vspace{-0.5cm}
	\vspace{-0.4cm}
	\caption{Additional redundancy required for protecting RS-Mask against differential fault attacks.}
	\label{fig:redundant}
\end{figure}

The infective RS-Mask scheme with the error-detection as shown in Fig. \ref{fig:redundant} can protect against all statistical fault analysis techniques. However, classical differential fault analysis (DFA) can still attack this scheme. In AES, an error in one byte of the state propagates into the entire column, after \textit{MixColumn}, with deterministic relations of errors in different bytes of the column. The error detection scheme in Fig. \ref{fig:redundant} can also be used to provide security against DFA. Having four random masks $R_i, i=0,\cdots,3$, we can calculate four random variables $E_i=E\times R_i, i=0,\cdots,3$. By adding $E_i$ to the $i$-th byte of a column, the errors of all bytes in a column are randomized with no deterministic relations.

\section{Results} \label{sec:results}
We evaluate the robustness of an FPGA implementation of the RS-Mask against power analysis using the Flexible Open-source workBench fOr Side-channel analysis (FOBOS) \cite{fobos}. Our FOBOS instance uses the NewAE CW305 Artix-7 FPGA target for instantiation of the RS-Mask AES, and Digilent Nexys A7 as the control board for synchronization with a host PC and target FPGA. We employ t-test statistics to show any leakage of sensitive data into the power traces. For fault analysis, we employ a simulated fault injection mechanism into the internal variables of the S-box. We use a similar fault model as described in \cite{dobraunig2018statistical}; the attacker can inject any type of fault into the internal operations of the S-box.
 
\subsection{Resistance against Power Analysis}
In order to demonstrate the resistance of the proposed RS-Mask against power analysis, we use test vector leakage assessment (TVLA) methodology with t-test statistics. TVLA is independent of leakage models while it can detect potential leakages of secret information to the power traces.

To conduct t-test on the RS-Mask scheme, we measured the power traces of FPGA implementation of AES during encryption of multiple plaintexts. Depending on the intermediate variable, i.e., the output of the S-box, the power traces are divided into two separate sets $T_A$ and $T_B$. To detect a first order leakage in the power traces, the t-test statistic is
\begin{equation}
    t = \frac{\mu_A - \mu_B}{\sqrt{\sigma_A^2/N_A + \sigma_B^2/N_B}}
\end{equation}
in which, $\mu_i$, $\sigma_i^2$ and $N_i$ are the sample mean, variance and the number of samples in the power trace set $i\in\{A,B\}$. The results of t-test statistics on 100K measured power traces are shown in Fig. \ref{fig:rs_ttest}. We observe $|t|<4.5$ which implies the lack of first-order leakage with a confidence level of 99.999\%.

\begin{figure}[t!]
	%\vspace{-0.8cm}
	\centering
	\includegraphics[width=0.4\textwidth]{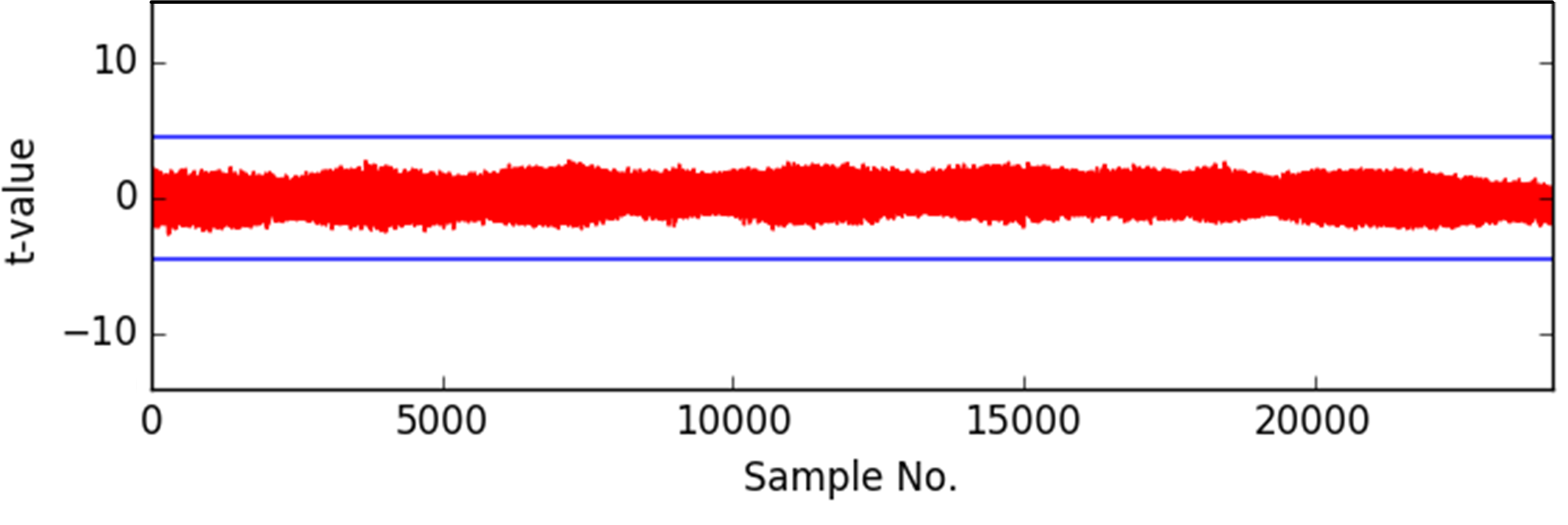}%\par\vspace{-0.5cm}
	\vspace{-0.3cm}
	\caption{T-test statistics on a FPGA implementation of RS-Masked AES with no observable leakage.}
	\vspace{-0.2cm}
	\label{fig:rs_ttest}
\end{figure}

For comparison, the t-test results of an unprotected implementation of AES are shown in Fig. \ref{fig:aes_ttest}. It is observed that the t-values extend beyond the threshold of 4.5 at multiple time samples. It implies that at those time samples, the power consumption of computations are strongly correlated with the processed secret data. Hence, a power analysis technique can likely exploit this correlation to recover the secret key.

\begin{figure}[t!]
	%\vspace{-0.8cm}
	\centering
	\includegraphics[width=0.4\textwidth]{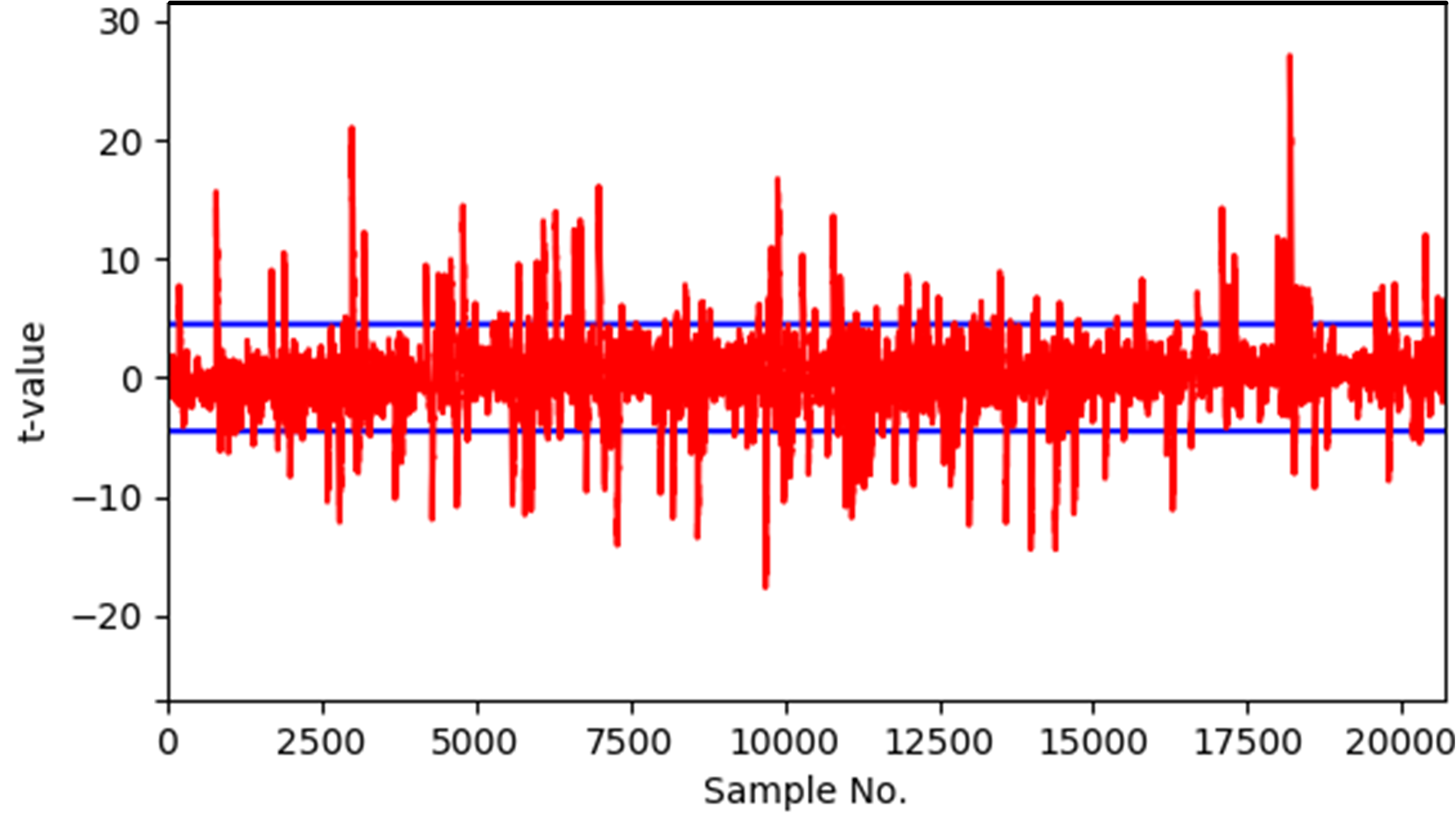}%\par\vspace{-0.5cm}
	\vspace{-0.3cm}
	\caption{T-test statistics on a FPGA implementation of unprotected AES showing leakage of sensitive information.}
	\vspace{-0.3cm}
	\label{fig:aes_ttest}
\end{figure}

\subsection{Resistance against Fault Analysis}
To demonstrate the resistance of the RS-Mask scheme against statistical fault analysis, we inspect the distribution of intermediate variables under fault injection, which ideally must be uniform. Next, we conduct a SIFA attack on an RS-Masked implementation of AES to show that the secret key is not distinguishable using the distribution of intermediate variables.

In Fig. 9, the distribution of faulty values with a single fault injected into the internal computations of the S-box is compared for two masked implementation of AES, i.e., TI-Masked and RS-Masked implementations. The fault location is at the output of the $GF(2^2)$ multiplier, marked with a red circle, shown in Fig. \ref{fig:inverter}.

In Fig. \ref{fig:fault_distr} we observe that, in the TI implementation, a single fault can induce a significant bias into the distribution of values at the output -- even with 3 shares. However, in the RS-Mask scheme, the distribution of faulty values is uniform. For comparison, the distribution of intermediate values calculated with an incorrect key guess is also shown in the figure, which is very similar to the distribution of values calculated with the correct key in the RS-Mask scheme. This result shows that the correct key is indistinguishable in the RS-Mask implementation with any SFA technique that exploits the distribution of faulty values.

\begin{figure}[t!]
	\vspace{-0.3cm}
	\centering
	\includegraphics[width=0.5\textwidth]{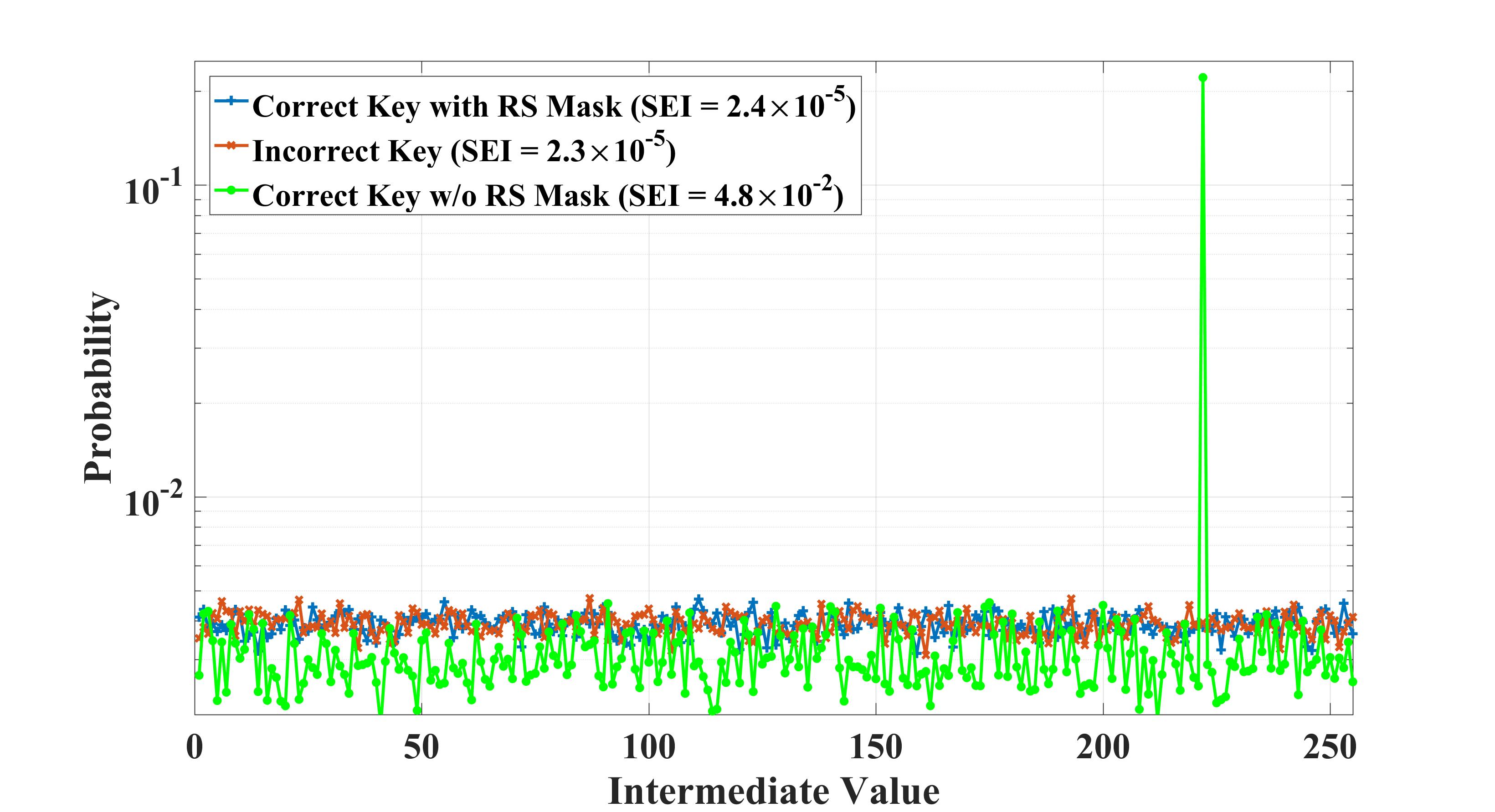}%\par\vspace{-0.5cm}
	\vspace{-0.3cm}
	\caption{Distribution of faulty values at the output of the S-box under attack on TI-Masked and RS-Masked AES implementation.}
	\label{fig:fault_distr}
\end{figure}

The distribution of correct values under ineffective faults is also compared in Fig. \ref{fig:correct_distr}. Similar to the faulty values, the distribution of correct values in the TI implementation is biased as a result of single fault injection. However, in the RS-Mask implementation, distribution of correct values is nearly uniform and is close to the distribution of values calculated with an incorrect key candidate. As a result, the correct key is not distinguishable using correct values.

\begin{figure}[t!]
	\vspace{-0.3cm}
	\centering
	\includegraphics[width=0.5\textwidth]{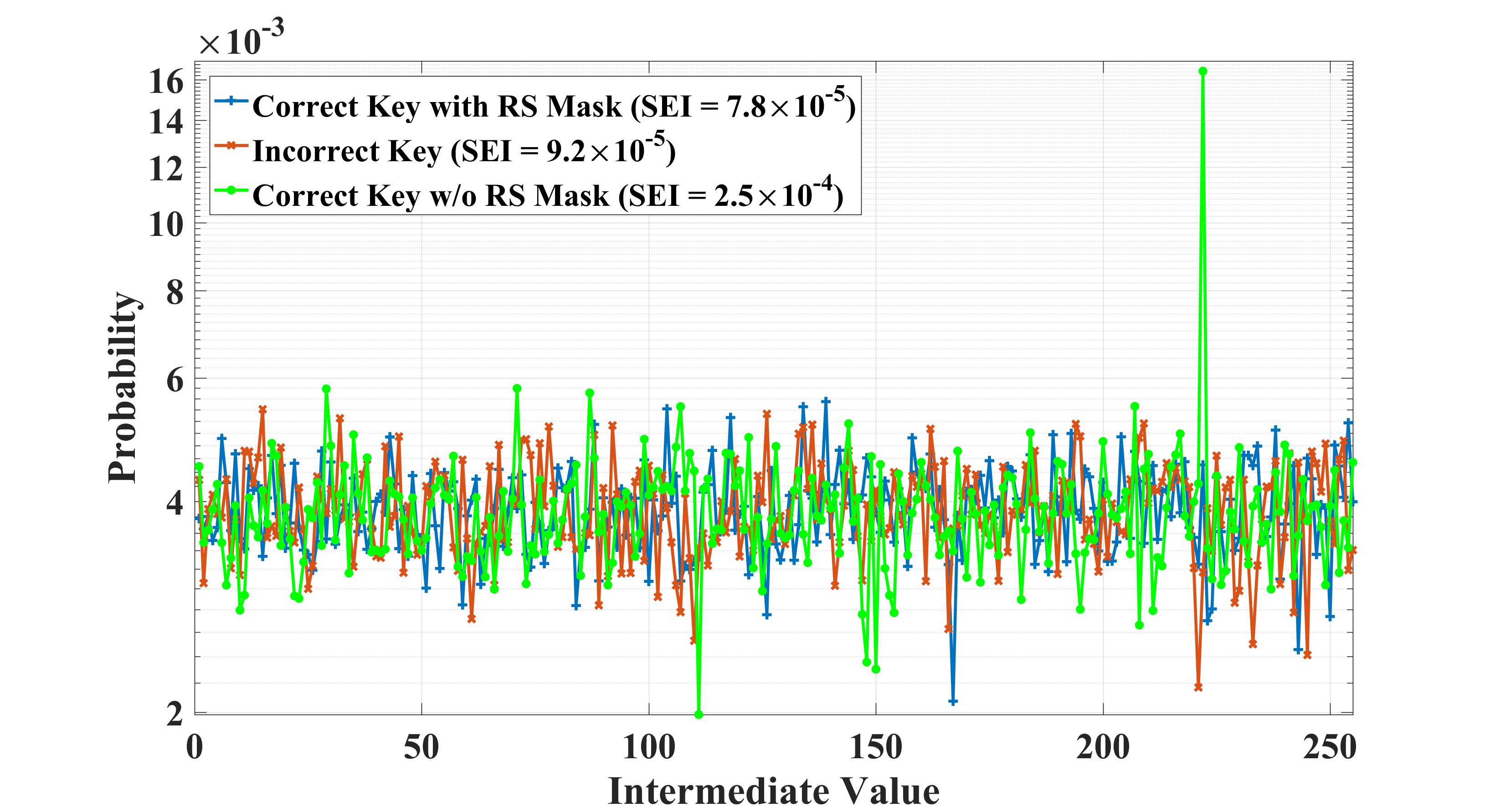}%\par\vspace{-0.5cm}
	\vspace{-0.3cm}
	\caption{Distribution of correct values at the output of the S-box under attack on TI-Masked and RS-Masked AES implementation.}
	\label{fig:correct_distr}
	\vspace{-0.2cm}
\end{figure}

\subsection{Resistance against SIFA attack}
To further demonstrate the resistance of RS-Mask against SIFA, we deploy a fault attack on the RS-Mask and TI implementations of AES with a single fault injected at the internal operations of the S-box at the beginning of round 9. The square Euclidean Imbalance (SEI) of the data distribution with the correct key and the maximum SEI of incorrect key candidates versus the size of data samples is shown in Fig.~\ref{fig:sifa_rs}. We observe that there is always an incorrect key with a larger bias than the correct key, irrespective of the size of data samples used. Hence, the correct key is indistinguishable using the distribution of correct values under fault injection.
\begin{figure}[t!]
	\vspace{-0.2cm}
	\centering
	\includegraphics[width=0.45\textwidth]{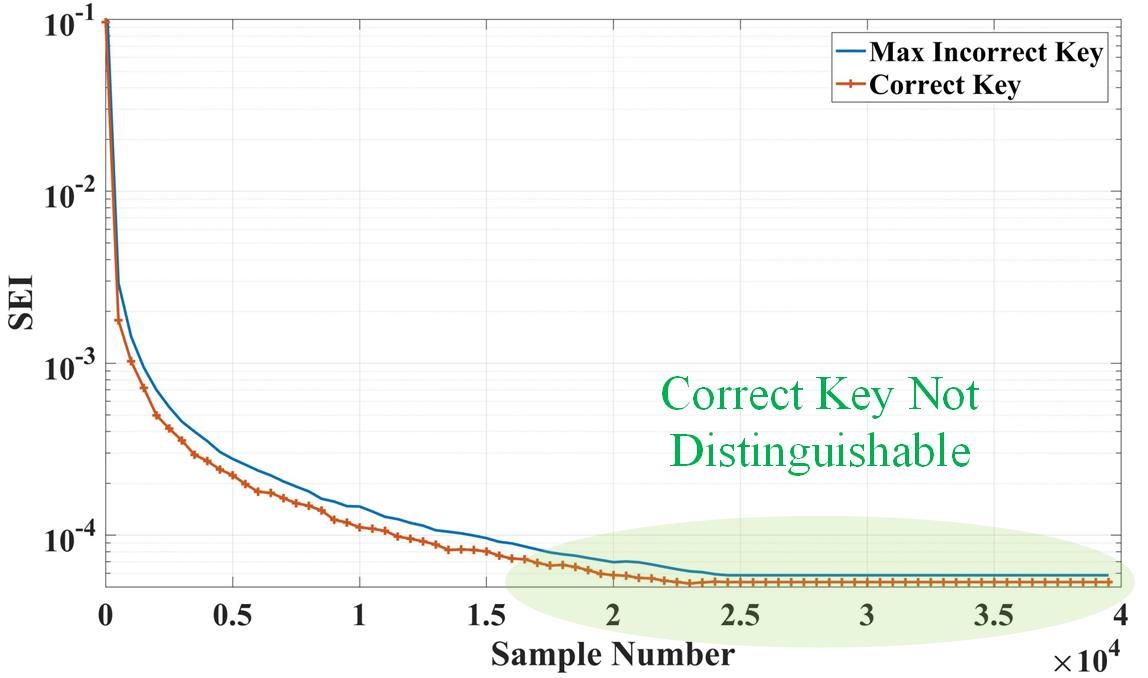}%\par\vspace{-0.5cm}
	\vspace{-0.3cm}
	\caption{Comparing SEI of the correct key with maximum SEI of incorrect key candidates versus the number of correct ciphertexts in RS-Mask.}
	\label{fig:sifa_rs}
\end{figure}

For comparison, the SIFA attack is also deployed on the TI implementation of AES. The SEI of the correct key and the max SEI of incorrect key candidates versus the size of data samples are compared in Fig. \ref{fig:sifa_ti}. After collecting almost 2000 correct ciphertexts, the SEI of correct key is always larger than all incorrect key candidates, which denotes a key recovery.
\begin{figure}[t!]
	\vspace{-0.2cm}
	\centering
	\includegraphics[width=0.45\textwidth]{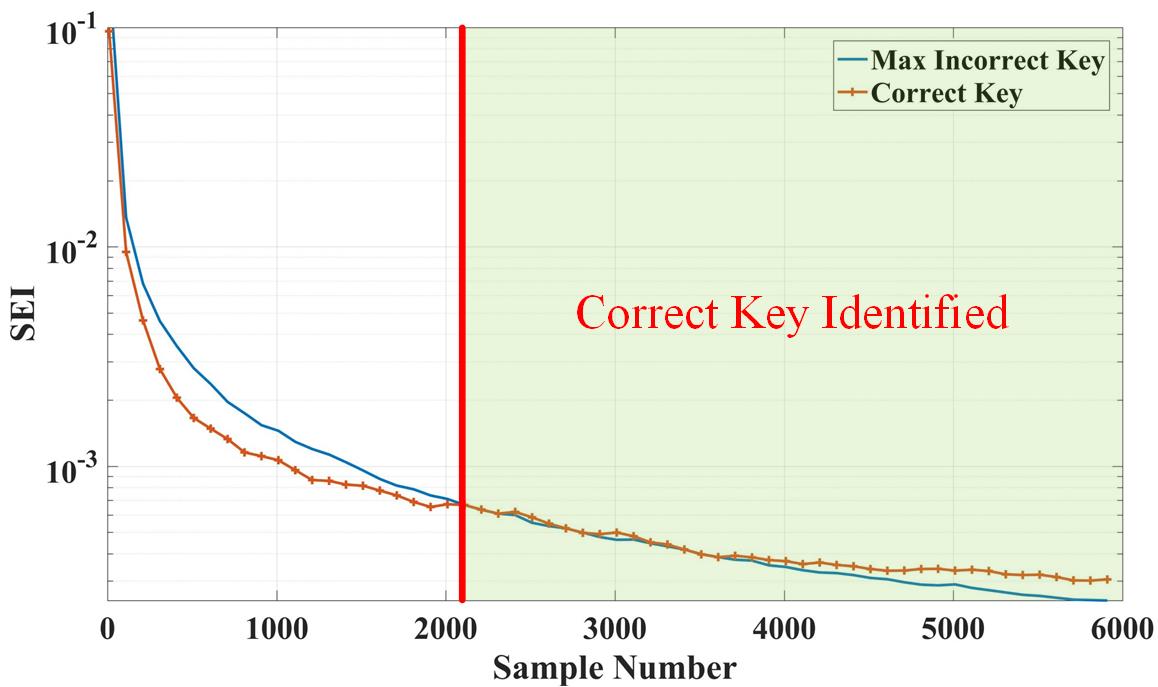}%\par\vspace{-0.5cm}
	\vspace{-0.3cm}
	\caption{Comparing SEI of the correct key with maximum SEI of incorrect key candidates versus the number of correct ciphertexts in TI implementation.}
	\label{fig:sifa_ti}
	\vspace{-0.2cm}
\end{figure}

\section{Conclusion and Future Work}  \label{sec:conclusion}
We proposed random space masking (RS-Mask) as an integrated countermeasure against both power and fault analysis. All sensitive variables in the RS-Mask assume a uniform distribution even under fault injection. We proved that the proposed scheme provides security against statistical fault analysis techniques that observe the distribution of intermediate variables. The security of the scheme against power analysis is based on the guarantees of threshold implementations (TI). We implemented the RS-Mask scheme for the AES algorithm on an FPGA with 10 pipeline stages, and showed that the area of the design with three shares is around $3.5\times$ an unprotected AES and $2\times$ a TI implementation. We employed a SIFA attack on RS-Mask which was unable to detect the correct key. In future work, we will 
further implement infective RS-Mask to demonstrate its robustness to all differential fault analysis techniques with a slight increase in overhead.

\section*{Acknowledgement}

This work was supported by NIST award 70NANB18H219 for Lightweight Cryptography in Hardware and Embedded Systems.

\bibliographystyle{IEEEtran}
\bibliography{IEEEabrv,HOST2020_refs.bib}

\end{document}